\def\isarxiv{1} 
\renewcommand*{\citet}{\cite}
\renewcommand*{\citep}{\cite}
\renewcommand*{\citet}{\cite}
\renewcommand*{\citep}{\cite}
\definecolor{mydarkblue}{rgb}{0,0.08,0.45}
\theoremstyle{plain}
\newtheorem{theorem}{Theorem}[section]
\newtheorem{lemma}[theorem]{Lemma}
\newtheorem{definition}[theorem]{Definition}
\newtheorem{fact}[theorem]{Fact}
\newtheorem{remark}[theorem]{Remark}
\newcommand{\wt}{\widetilde}
\newcommand{\R}{\mathbb{R}}
\newcommand{\Tmat}{{\cal T}_{\mathrm{mat}}}
\newcommand*{\RN}[1]{\expandafter\@slowromancap\romannumeral #1@}
\begin{document}

\ifdefined\isarxiv

\date{}

\title{A Tighter Complexity Analysis of SparseGPT}
\author{
Xiaoyu Li\thanks{\texttt{
xli216@stevens.edu}. Stevens Institute of Technology.}
\and
Yingyu Liang\thanks{\texttt{
yingyul@hku.hk}. The University of Hong Kong. \texttt{
yliang@cs.wisc.edu}. University of Wisconsin-Madison.} 
\and
Zhenmei Shi\thanks{\texttt{
zhmeishi@cs.wisc.edu}. University of Wisconsin-Madison.}
\and 
Zhao Song\thanks{\texttt{ magic.linuxkde@gmail.com}. The Simons Institute for the Theory of Computing at the University of California, Berkeley.}
}

\else

\title{A Tighter Complexity Analysis of SparseGPT} 

\author{%
  Xiaoyu Li \\
  Stevens Institute of Technology \\
  \texttt{xli216@stevens.edu} \\
    \And    
  Yingyu Liang \\
  The University of Hong Kong \\ 
  University of Wisconsin-Madison  \\
  \texttt{
yingyul@hku.hk, yliang@cs.wisc.edu} \\
  \And
  Zhenmei Shi \\
  University of Wisconsin-Madison \\
  \texttt{zhmeishi@cs.wisc.edu} \\
  \And
  Zhao Song \\
  The Simons Institute for the Theory of Computing \\ at the University of California, Berkeley \\
  \texttt{magic.linuxkde@gmail.com} \\
  \And
}

\maketitle 
\fi

\ifdefined\isarxiv
\begin{titlepage}
  \maketitle
  \begin{abstract}

In this work, we improved the analysis of the running time of SparseGPT [Frantar, Alistarh ICML 2023] from $O(d^{3})$ to $O(d^{\omega} + d^{2+a+o(1)} + d^{1+\omega(1,1,a)-a})$ for any $a \in [0, 1]$, where $\omega$ is the exponent of matrix multiplication. In particular, for the current $\omega \approx 2.371$ [Alman, Duan, Williams, Xu, Xu, Zhou 2024], our running time boils down to $O(d^{2.53})$. This running time is due to the analysis of the lazy update behavior in iterative maintenance problems such as [Deng, Song, Weinstein 2022; Brand, Song, Zhou ICML 2024].
  \end{abstract}
  \thispagestyle{empty}
\end{titlepage}

\newpage

\else

\begin{abstract}

\end{abstract}

\fi

\section{Introduction}\label{sec:intro} 

Large Language Models (LLMs) have been widely applied in many AI applications. 
Accelerating its inference speed is crucial to reduce the latency time during user usage. 
One of the recent brilliant works, SparseGPT~\citep{fa23}, uses calibration data to prune the parameters of GPT-family models~\citep{bmr+20, zrg+22}, by using the optimal brain damage technique~\citep{lds89, hs92}. 
Their algorithm (Algorithm~\ref{alg:sparse_gpt}) can prune at least 50\% parameters with structure patterns, while the perplexity increase is negligible. Thus, SparseGPT can reduce the running time and GPU memory usage while keeping high performance for LLMs' applications.  

SparseGPT~\citep{fa23} claims their main pruning algorithm (Algorithm~\ref{alg:sparse_gpt}) takes running time $O(d^3)$ where $d$ is the model's hidden feature dimensions. In our work, we improved the running time analysis to get a tighter running time complexity bound, i.e., $O(d^{2.53})$, for Algorithm~\ref{alg:sparse_gpt}. Formally, we state our main results as follows.

\begin{theorem}[Main result (Restatement of Theorem~\ref{thm:main})]\label{thm:main_informal}
    Let lazy update block size $B = d^a$ for any $a \in [0, 1]$. Then Procedure \textsc{SparseGPT} in Algorithm~\ref{alg:sparse_gpt} achieves the running time
    \begin{align*}
        O(d^{\omega} + d^{2+a+o(1)} + d^{1+\omega(1,1,a)-a}).
    \end{align*}
\end{theorem}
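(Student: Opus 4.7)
The plan is to decompose the cost of the \textsc{SparseGPT} procedure into three disjoint phases — preprocessing, in-block maintenance, and periodic flush — and to bound each by the appropriate (rectangular) matrix-multiplication primitive, with the block size $B = d^a$ serving as the tuning parameter that balances the latter two terms. The algorithm first forms and inverts the Hessian $H \in \mathbb{R}^{d \times d}$ and then sweeps over the $d$ columns, at each step pruning some entries and compensating the remaining unprocessed columns with a rank-one-type correction derived from a row of (the Cholesky factor of) $H^{-1}$. Reading off this structure, the preprocessing is exactly a single dense inverse / triangular factorization, which costs $O(d^{\omega})$ via fast matrix multiplication; this accounts for the first term.

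I would then import the lazy-update scheme from Deng--Song--Weinstein and Brand--Song--Zhou. Partition the $d$ columns into $d/B = d^{1-a}$ consecutive blocks of width $B$. Inside a block, instead of immediately propagating each rank-one correction into the full trailing $d$-dimensional state, I buffer the successive corrections in a $B \times d$ ``delta'' matrix and only look up the portion that is needed to prune the next in-block column. Because the per-column correction reads a single row and writes a rank-one update whose global effect depends linearly on previously-processed rows, handling the next in-block column can be performed correctly using only the already-buffered rows plus $O(Bd)$ work to read and combine them. Summed over the $B$ iterations within a block and the $d^{1-a}$ blocks, the in-block work is $O(d^{1-a}\cdot B \cdot B d) = O(d^{2+a})$; absorbing fast-arithmetic and logarithmic overhead into a $d^{o(1)}$ factor yields the second term $d^{2+a+o(1)}$.

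At the boundary of each block, the $B$ buffered corrections are flushed into the global state via a single $(d \times B) \cdot (B \times d)$ matrix product. By the definition of the rectangular matrix-multiplication exponent, one such flush costs $d^{\omega(1,1,a)+o(1)}$, so summed over the $d^{1-a}$ blocks the total flush cost is $d^{1+\omega(1,1,a)-a}$. Adding the three phases gives the claimed $O(d^{\omega} + d^{2+a+o(1)} + d^{1+\omega(1,1,a)-a})$ bound, and optimizing the free parameter $a$ (and substituting the current upper bound on $\omega(1,1,a)$) recovers the headline $O(d^{2.53})$ figure.

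The main obstacle I expect is the correctness argument for the lazy step: one must verify that the value \textsc{SparseGPT} computes for column $i$ inside a block can indeed be reconstructed from the original $H^{-1}$ data restricted to the current block, together with the already-processed in-block deltas, without ever touching the corrections that are still buffered globally. This hinges on the lower-triangular structure of the update produced by pruning column $j$ — namely that it only affects columns $k > j$ and does so through entries that are already materialized in the local $B \times d$ buffer by the time column $i$ is processed. Once this block-locality invariant is verified, the three-way cost split above is a straightforward counting argument and the theorem follows.
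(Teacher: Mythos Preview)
Your three-phase decomposition and the accounting for each phase match the paper's proof exactly: the $O(d^{\omega})$ preprocessing (Lemma~\ref{lem:run_time_h}), the $O(d^{2+a})$ in-block updates (Lemma~\ref{lem:run_time_2}), and the $O(d^{1+\omega(1,1,a)-a})$ block flushes (Lemma~\ref{lem:run_time_1}) are precisely the terms the paper isolates, with the remaining steps (mask selection, Hadamard products) absorbed into the $d^{2+o(1)}$ factor via Lemma~\ref{lem:run_time_mask}.

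The one place you over-complicate things is the final paragraph. You frame the lazy buffering as something you must \emph{introduce} and then certify as equivalent to a baseline procedure. But Algorithm~\ref{alg:sparse_gpt} is taken verbatim from \cite{fa23} and already contains the two-level loop with the buffer $E$ and the deferred update at Line~\ref{line:update_w_outter}; the theorem is a pure running-time statement about that fixed pseudocode. There is no correctness invariant to verify here, and the paper accordingly gives none --- it simply walks through each line and tallies the cost. Dropping that last paragraph, your proof is the paper's proof.
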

For the current $\omega \approx 2.371$, according to the Table~1 in~\citet{adw+24}, we should choose $a\approx 0.5275$ in order to balance the terms $d^{2+a+o(1)}$ and $d^{1+\omega(1,1,a)-a}$. Then the running time boils down to $O(d^{2.53})$, which is better than $O(d^3)$ as claimed in~\citet{fa23}.

The key technique in our analysis is to compute the complexity of SparseGPT by using lazy update. 
It comes from an interesting fact of fast rectangular matrix multiplication: the time complexity of multiplying a $d \times d$ matrix by a
$d \times 1$ matrix is the same as the times complexity of multiplying a $d \times d$ matrix by a $d \times d^{a}$ matrix for any nonnegative $a \leq \alpha$ where $\alpha$ is the dual exponent of matrix multiplication and currently $\alpha \approx 0.321$~\citep{wxxz24, lg24, lgu18}. 
The lazy update has been widely used to speed up the training of neural networks~\citep{clp+20, bpsw21, szz24} and maintain dynamic attention in large language models~\citep{bsz24}.
The high-level intuition of lazy update is that during iterative one-rank updates for a $d \times d$ matrix, 
we use an external matrix $d\times d^a$ to save the input for updates while 
we conduct fast matrix multiplication only when $d^a$ updates happen. Based on the lazy update idea, we achieve our tighter complexity bound.  
\section{Related Work}\label{sec:related}

\subsection{Model Acceleration}
Model acceleration is critical and urgently needed in many practical applications. There are many techniques for model acceleration. One line of work is to change model architecture to support fast inference, e.g., Mamba~\citep{gd23,dg24}, PolySketchFormer~\citep{kmz23}, Linearizing Transformers~\citep{zhdk23,zbkr24,mvk+24}, Hopfield Model~\citep{hu2024nonparametric,hu2024outlier,wu2024uniform,xu2024bishop,hu2024computational,wu2023stanhop,hu2023sparse,hwl24} and so on. 
Another line of work is accelerating model computation on system level, e.g., FlashAttetnion~\citep{dfe+22,d23,sbz+24,sy24,lls+24_io}, parallel decoding~\citep{ssu18}, quantization~\citep{zlc+24,tzz+24,ltt+24, zdh24} and so on. 
To accelerate LLMs training and inference, there is a line of work to approximate attention matrix computation~\citep{as23,as24_arxiv,hjk+23,zhmk24,lssz24a,pmn+23,ctwc24,lssy24,lls+24b, gswy23,lssy24,lss+24,cll+24,hwsl24,cls+24} in almost linear time. 
Some specific technique is developed to accelerate LLMs generation, e.g., KV-Cache compression~\citep{gzl+23,ldlg23,xtc+23,zsz+24,lwd+23,dyz+24,scy+24,smn+24} and speculative decodings~\citep{clg+24,lch+24, scy+24, esl+24, dms24}.  

\subsection{Model Pruning}
Pruning is a technique aimed at reducing the number of weights in a neural network by selectively removing certain neurons~\citep{lds89, hs92}. This approach has gained considerable attention in recent years as a method to enhance the efficiency and scalability of deep learning models~\citep{hptd15, fc18, lat19, wzg19, bgfg20, bmbe20, lz20, tkyg20, cjd+21, habn+21, hci+21, jcr+22, fa22, fa23, slbk24}.
Pruning can be categorized based on its stage in the training process: pre-training pruning and post-training pruning. Pre-training pruning involves pruning the network at initialization. \citet{fc18} demonstrated that a neural network pruned at initialization could be trained to achieve performance comparable to a dense model, a phenomenon referred to as the lottery ticket hypothesis. This discovery motivated a line of research focused on developing methods to reduce the computational cost of pruning neural networks at initialization~\citep{lat19, wzg19, lz20, tkyg20, cjd+21}. More recently, \citet{ylg+23} provided theoretical evidence that pre-training pruning can enhance a model's generalization ability.
Post-training pruning, initially popularized through its application in quantization~\citet{navb+20, hnh+21, lgt+21, slbk24}, was later extended to pruning by \citet{hci+21, fa22, kks+22, lls+24_prune}. Post-training pruning aims to compress a well-optimized model using a small set of calibration data. This process typically involves layer-wise pruning of the neural network. Notably, \citet{hci+21} proposed a provable and efficient method for compressing a model by ``stitching together'' individually compressed layers.

\subsection{Lazy Update}

In recent years, the lazy update idea used in iterative maintenance problems has emerged as a crucial technique to effectively and efficiently solve various optimization problems, including linear programming~\citep{cls19, bra20, blss20, jswz21, sy21, lsz+23}, semi-definite programming~\citep{jkl+20, hjs+22, gs22, syz23}, empirical risk minimization~\citep{lsz19, gsz23, qszz23}, cutting plane methods~\citep{jlsw20}, neural network training~\citep{clp+20, bpsw21, szz24}, discrepancy minimization~\citep{dsw22}, dynamic attention problems~\citep{bsz24} and so on. 

We highlight several previous papers that share similar running time complexity. For comparison, we assume the input size of these problems is dominated by a parameter $d$. In the line of developments of fast linear program solvers, \citet{cls19} first introduced the lazy update idea to develop efficient dynamic inverse structures for implementing interior point methods in solving linear programming problems, and it achieves the running time $O^*(d^\omega + d^{2.5-a/2} + d^{1.5+a})$. \citet{sy21} uses a different method to achieve linear programming in the same complexity $O^*(d^\omega + d^{2.5-a/2} + d^{1.5+a})$ by incorporating the sparse sketching matrix to speed up the online matrix-vector multiplication. \citet{jswz21} improves the linear programming to $O^*(d^\omega + d^{2.5 - a/2} + d^{1.5 + a - \wt{a}/2} + d^{0.5+a+(\omega-1)\wt{a}})$ time where $\wt{a} \in [0, \alpha a]$ by designing a two-level lazy update framework for efficiently maintaining a projection matrix.
Later, \citet{lsz19} generalizes the algorithm for linear programming to empirical risk minimization, which also takes $O^*(d^\omega + d^{2.5-a/2} + d^{1.5+a})$ time. Different from~\citet{cls19}, which employs a stochastic central path method that updates weights using a random sparse vector, \citet{lsz19} introduces a robust deterministic central path method. Additionally, \citet{lsz19} proposes an efficient data structure capable of handling updates efficiently even when the weight update vector is dense. Further, \citet{qszz23} uses a different method to achieve empirical risk minimization in the same complexity $O^*(d^\omega + d^{2.5-a/2} + d^{1.5+a})$ via online projection matrix-vector multiplication. 

There are numerous other applications of the lazy update technique. For instance, \citet{dsw22} employs this approach to solve a subroutine of the discrepancy minimization problem in $O^*(d^\omega + d^{2+a} + d^{1 + \omega(1,1,a) - a})$. This is achieved by designing a data structure that efficiently implements the iterative Edge-Walk partial-coloring algorithm proposed by \citet{lm15}, utilizing the lazy update concept. More recently, \citet{bsz24} demonstrates how the lazy update concept can be applied to a dynamic attention problem, achieving an amortized update time of $O(d^{\omega(1,1,a)-a})$.
\section{Preliminary}\label{sec:preli}
\subsection{Notations.}
For a matrix $A$, we denote its transpose by $A^\top$. We use $I_{d\times d}$ to denote a $d\times d$ identity matrix. We use $\mathbf{1}_{n \times d}$ to denote an $n \times d$ matrix where all entries are ones and $\mathbf{0}_{n \times d}$ to denote an $n \times d$ matrix where all entries are zeros. For a matrix $A$, the notation $A_{[i_1, i_2], [j_1, j_2]}$ refers to the submatrix of $A$ corresponding to the rows from $i_1$ to $i_2$ (including $i_1$ and $i_2$) and columns from $j_1$ to $j_2$ (including $j_1$ and $j_2$). When we write $A_{*, [j_1, j_2]}$, it denotes the submatrix of $A$ that includes all rows and restricts the columns to those from $j_1$ to $j_2$. For two matrices $A, B$ of the same size, we denote the Hadamard product of $A$ and $B$ by $A \circ B$. We use
$O^*(f(d))$ to hide $f(d)^{o(1)}$ factor.

\subsection{Definitions and Facts} 
We now introduce some key definitions and key facts.
\begin{definition}
    For three positive integers $d_1, d_2, d_3$, we use $\Tmat(d_1, d_2, d_3)$ to denote the time of multiplying a $d_1 \times d_2$ matrix with a $d_2 \times d_3$ matrix. 
\end{definition}

The following fact shows that the order of $d_1, d_2, d_3$ only results in a constant factor difference. 

\begin{fact}[\citet{bcs13,bla13}]
    It holds that
    \begin{align*}
        \Tmat(d_1, d_2, d_3) = O(\Tmat(d_1, d_3, d_2)) = O(\Tmat(d_2,d_1,d_3)).
    \end{align*}
\end{fact}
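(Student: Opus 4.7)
The plan is to establish this classical fact via two standard tools from algebraic complexity theory: the transpose identity $(AB)^\top = B^\top A^\top$, and the transposition (Tellegen) principle for bilinear algorithms. Together these generate the full permutation symmetry of the matrix multiplication tensor $\langle d_1, d_2, d_3 \rangle$, and hence imply both of the stated $O(\cdot)$ equivalences.

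First I would handle the easy, ``transpose-only'' direction. Given any algorithm that multiplies an $A \in \R^{d_1 \times d_2}$ by a $B \in \R^{d_2 \times d_3}$ in time $T$, we can treat its inputs as $B^\top \in \R^{d_3 \times d_2}$ and $A^\top \in \R^{d_2 \times d_1}$ and read its output as $(AB)^\top$; the transposition of inputs/outputs costs only $O(d_1 d_2 + d_2 d_3 + d_1 d_3)$, which is absorbed in the $O(\cdot)$. This shows $\Tmat(d_1, d_2, d_3) = O(\Tmat(d_3, d_2, d_1))$, i.e.\ the involution that reverses the dimension ordering. However, this alone only gives a subgroup of order $2$ inside the symmetric group $S_3$ on the three dimensions, and neither of the two stated permutations (swap of the last two, respectively first two, arguments) lies in that subgroup.

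The nontrivial content is therefore a cyclic permutation, say $\Tmat(d_1, d_2, d_3) = O(\Tmat(d_2, d_3, d_1))$; combined with the transpose identity this yields both displayed equalities. For this I would invoke the transposition principle from~\citet{bcs13, bla13}: if a bilinear map $\phi\colon U \times V \to W$ is computed by an arithmetic algorithm using $M$ bilinear multiplications and $L$ linear operations, then for each fixed $u \in U$ the linear map $v \mapsto \phi(u,v)$ admits an algorithm of size $M + L$, whose ``transpose'' (dual) computes the adjoint map $W^\ast \to V^\ast$ in essentially the same arithmetic cost. Applied to $\phi(A,B) = AB$, the dual map is precisely $(A, C) \mapsto A^\top C$, which is a matrix multiplication of shapes $(d_2 \times d_1)$ by $(d_1 \times d_3)$; this is $\Tmat(d_2, d_1, d_3)$. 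Iterating the principle and composing with the transpose identity realizes every element of $S_3$.

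Equivalently in tensor language, the trilinear form $\sum_{i,j,k} A_{ij} B_{jk} C_{ki}$ is invariant under cyclic permutation of the three factors, so the matrix multiplication tensors $\langle d_1, d_2, d_3 \rangle$, $\langle d_2, d_3, d_1 \rangle$, and $\langle d_3, d_1, d_2 \rangle$ have the same rank (and border rank), and the transposition principle makes this algorithmic rather than merely tensorial. The main obstacle, were one to avoid citing \citet{bcs13, bla13}, is exactly the cyclic case: the transpose identity $(AB)^\top = B^\top A^\top$ is simply insufficient and one must exploit the bilinear structure of the multiplication procedure itself, not merely its input/output layout. Since we are free to invoke \citet{bcs13, bla13}, we simply absorb the $O(d_1 d_2 + d_2 d_3 + d_1 d_3)$ bookkeeping into the constant and conclude.
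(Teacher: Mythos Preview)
The paper does not prove this fact at all: it is stated as a cited result from \citet{bcs13,bla13} with no accompanying argument. Your proposal therefore goes beyond what the paper does, supplying an actual proof sketch where the paper merely cites.

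Your sketch is the standard argument and is correct. You rightly observe that the transpose identity $(AB)^\top = B^\top A^\top$ alone only yields the reversal $(d_1,d_2,d_3) \leftrightarrow (d_3,d_2,d_1)$, which is a single transposition in $S_3$ and does not by itself give either of the two displayed permutations. The cyclic symmetry is the nontrivial part, and invoking the transposition principle for bilinear algorithms (equivalently, the $S_3$-symmetry of the rank of the matrix multiplication tensor $\langle d_1,d_2,d_3\rangle$ via the cyclic invariance of the trace form $\tr(ABC)$) is exactly how the cited references establish it. One small caveat worth making explicit: $\Tmat$ in the paper is defined as ``time'' rather than bilinear rank, so strictly speaking you are using that any arithmetic algorithm for matrix multiplication can be assumed bilinear up to a constant factor (or that the $O(d_1 d_2 + d_2 d_3 + d_1 d_3)$ overhead of the transposition principle is dominated); this is standard and is covered in the same references, but it is the one place a careful reader might pause.
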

We provide a definition of $\omega(\cdot,\cdot,\cdot)$ here. 
\begin{definition}
\label{def:omega_alpha}
For $a,b,c>0$, we use $d^{w(a,b,c)}$ to denote the time complexity of multiplying a $d^a \times d^b$ matrix with a $d^b \times d^c$ matrix. We define $\omega := \omega(1,1,1)$ as the exponent of matrix multiplication. We use $\alpha$ to denote the dual exponent of matrix multiplication, which is the largest value such that $\omega(1, \alpha, 1) = 2 + o(1)$. 
\end{definition}

In other words, $\omega$ means that multiplying two $d \times d$ matrices require time $O(d^\omega)$, and $\alpha$ is the largest number such that we can multiply a $d \times d^\alpha$ matrix with a $d^\alpha \times d$ in the subquadratic time.

\begin{lemma}[\citet{adw+24, wxxz24, lg24}]
    Currently, we have $\omega \approx 2.371$ and $\alpha \approx 0.321$. 
\end{lemma}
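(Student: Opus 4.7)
The plan is to import these values directly from the cited works rather than to derive them from scratch: the lemma is a snapshot of the current state of the art on fast (rectangular) matrix multiplication, and any self-contained proof would entail reproducing a long line of research going back through Strassen, Coppersmith--Winograd, Stothers, Vassilevska Williams, Le Gall, and Alman--Vassilevska Williams. Since the surrounding paper uses $\omega$ and $\alpha$ only as inputs to its running-time bound, a citation-based argument is entirely appropriate and is what I would present.

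For the square exponent bound $\omega \le 2.371$, the plan is to invoke~\citep{adw+24}. At a high level, that proof takes high tensor powers of a carefully chosen starting tensor of Coppersmith--Winograd type, applies the laser method to extract many independent matrix-multiplication sub-tensors, and then optimizes a convex program whose value upper-bounds $\omega$. The numerical improvements over prior bounds such as $2.3728639$ come from tightening both the asymmetric hashing step and the combinatorial handling of ``holes'' in the resulting tensor, which were the main sources of loss in earlier analyses.

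For the dual exponent $\alpha \ge 0.321$, I would invoke the rectangular-matrix-multiplication bounds from~\citep{lgu18, lg24, wxxz24}. Recall from Definition~\ref{def:omega_alpha} that $\alpha$ is the largest $k$ with $\omega(1,k,1) = 2 + o(1)$. One runs essentially the same laser-method machinery on a rectangular product tensor, identifies explicit values of $k$ for which the resulting convex program still returns the exponent $2$ in the limit, and takes the supremum; the current record in this line gives $\alpha \gtrsim 0.321334$, from which the claimed $\alpha \approx 0.321$ follows immediately.

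The main obstacle, were one to attempt anything beyond a citation, would be the sheer technical weight of the underlying constructions and the numerical verification of the convex programs, both of which are delegated entirely to the cited papers. In the context of the present work, the sole role of this lemma is to allow one to substitute concrete numerical values of $\omega$ and $\alpha$ into the running-time bound of Theorem~\ref{thm:main_informal}; in particular, picking $a \approx 0.5275$ to balance the terms $d^{2+a+o(1)}$ and $d^{1+\omega(1,1,a)-a}$ yields the headline complexity $O(d^{2.53})$.
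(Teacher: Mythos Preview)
Your proposal is correct and matches the paper's approach: the paper provides no proof at all for this lemma, treating it purely as a citation of known results from \citet{adw+24, wxxz24, lg24}, which is exactly what you do (with some added high-level commentary on the laser method that the paper omits entirely).
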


\begin{algorithm}[!ht]\caption{The SparseGPT algorithm (Algorithm~1 in~\citet{fa23}).}
\label{alg:sparse_gpt}
\begin{algorithmic}[1]
\Procedure{SparseGPT}{$p\in[0,1]$, $W \in \R^{d \times d}$, $X \in \R^{d \times d}$, $B \in \mathbb N_+$, $B_s \in \mathbb N_+$, $\lambda > 0$}
\State \Comment{Pruning ration $p \in [0,1]$}
\State \Comment{Weight matrix $W \in \R^{d \times d}$}
\State \Comment{Input feature matrix $X \in \R^{d \times d}$}
\State \Comment{Lazy update block size $B \in \mathbb N_+$, $B=d^a$ for any $a\in[0,1]$}
\State \Comment{Adaptive mask size $B_s \in  \mathbb N_+$}
\State \Comment{Regularization parameter $\lambda > 0$}
\State $M, ~ E \gets {\bf 1}_{d \times d}, ~ {\bf 0}_{d \times B}$ \label{line:init_m} \Comment{$O(d^2)$}
\State $\wt{H} \gets (XX^\top + \lambda I_{d \times d})^{-1}$  \Comment{$O(d^\omega)$ by Lemma~\ref{lem:run_time_h}} \label{line:init_h} 
\For{$i = 0, B, 2B, \ldots, \lfloor \frac{d}{B} \rfloor B$}
\For{$j = i + 1, \ldots, i+B$}
\If{$j \bmod B_s = 0$} \label{line:check_j} \Comment{$O(d)$}
\State $M_{*,[j, j+B_s]} \gets \textsc{MaskSelect}(p, W_{*, [j, j+B_s]}, \wt{H}, j - 1)$ \label{line:update_m}
\Comment{$O(d^2 \log d)$ by Lemma~\ref{lem:run_time_mask}}
\EndIf
\State $E_{*,j-i} \gets (\mathbf{1}_{d \times 1}-M_{*,j})\circ W_{*,j}/\wt{H}_{j, j}$ \label{line:update_e_2} \Comment{$O(d^2)$}
\State $W_{*,[j, i+B]} \gets W_{*, [j, i+B]} - E_{*,j-i} \wt{H}_{j,[j, i+B]}$ \label{line:update_w_inner}
\Comment{$O(d^{2+a})$ by Lemma~\ref{lem:run_time_2}}
\EndFor
\State $W_{*,[i+B, d]} \gets W_{*, [i+B,d]} - E \wt{H}_{[i, i+B],[i+B, d]}$ \label{line:update_w_outter}
\Comment{$O(d^{1+\omega(1,1,a) - a})$ by Lemma~\ref{lem:run_time_1}}
\EndFor
\State $W \gets W \circ M$ \Comment{$O(d^2)$} \label{line:final_w}
\EndProcedure
\State
\Procedure{MaskSelect}{$p \in [0,1]$, $W' \in \R^{d \times r}$, $\wt{H} \in \R^{d \times d}, s \in \mathbb{N}_+$} 
\State \Comment{Sub-weight matrix $W' \in \R^{d \times r}$; Inverse of Hessian matrix $\wt{H} \in \R^{d \times d}$}
\State \Comment{Index $s \in \mathbb{N}_+$, recording the position of $W'$ in $W$}
\State $M' \gets \mathbf{0}_{d \times r}$ \label{line:sub_init_m} \Comment{$O(dr)$}
\For{$k = 1, \ldots, r$}
\State $w \gets W'_{*, k} $ \label{line:sub_init_w} \Comment{$w\in\R^d$, $O(dr)$}
\State $w \gets (w \circ w)/ (\wt{H}_{s+k, s+k})^2$ \label{line:sub_update_w} \Comment{$O(dr)$}
\State $J \gets $ \label{line:sub_update_i} indices of top $(1-p)d$ largest entries of $w$ \Comment{$O(r \cdot d\log d)$ by sorting}
\For{$j \in J$}
\State $M'_{k, j} \gets 1$ \label{line:sub_update_m_ki} \Comment{$O(dr)$}
\EndFor
\EndFor
\State \Return{$M'$}
\EndProcedure
\end{algorithmic}
\end{algorithm}

\section{Main Results}\label{sec:analysis}

In this section, we present our principal findings. Our analysis demonstrates that SparseGPT (refer to Algorithm~\ref{alg:sparse_gpt}) attains the desired running time of $ O(d^{\omega} + d^{2+a} + d^{1+\omega(1,1,a)-a}) $ (Theorem~\ref{thm:main}). For the sake of clarity, we assume that both the weight matrix $ W $ and the input feature matrix $ X $ are of dimensions $ \mathbb{R}^{d \times d} $. However, our analysis remains valid for more general cases where $ W \in \mathbb{R}^{n \times d} $ and $ X \in \mathbb{R}^{d \times N} $, provided that $ n = O(d) $ and $N = O(d) $.

\begin{theorem}[Main result]\label{thm:main}
    Let lazy update block size $B = d^a$ for any $a \in [0, 1]$. Then Procedure \textsc{SparseGPT} in Algorithm~\ref{alg:sparse_gpt} achieves the running time
    \begin{align*}
        O^*(d^{\omega} + d^{2+a} + d^{1+\omega(1,1,a)-a}).
    \end{align*}
\end{theorem}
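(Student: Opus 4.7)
The plan is to bound the running time of Procedure \textsc{SparseGPT} line by line and then sum the contributions. The outer loop runs $d/B = d^{1-a}$ times and the inner loop runs $B = d^a$ times, so there are $d$ inner iterations overall but only $d^{1-a}$ invocations of the batched update on Line~\ref{line:update_w_outter}.

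First I would dispense with the one-shot and low-order costs. Line~\ref{line:init_h} computes the regularized inverse Hessian $\wt{H}=(XX^\top+\lambda I)^{-1}$ in $O(d^\omega)$ by Lemma~\ref{lem:run_time_h}; the initializations on Line~\ref{line:init_m} and the final Hadamard product on Line~\ref{line:final_w} are $O(d^2)$; and the column update on Line~\ref{line:update_e_2} is $O(d)$ per inner iteration and $O(d^2)$ in total. Each MaskSelect call on Line~\ref{line:update_m} fires only when $j \bmod B_s = 0$, and Lemma~\ref{lem:run_time_mask} together with the $d/B_s$ call count gives a total of $O^*(d^2)$.

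Next I would account for the rank-one correction on Line~\ref{line:update_w_inner}: this is the outer product of a $d$-vector with a row of length at most $B$, costing $O(d^{1+a})$ per call and $O(d^{2+a})$ in total across the $d$ inner iterations (Lemma~\ref{lem:run_time_2}). The crucial step is bounding the batched update on Line~\ref{line:update_w_outter}: at the end of each outer block, the buffer $E\in\R^{d\times B}$ stores all $B$ pending corrections, and they are applied jointly as the rectangular product $E\cdot\wt{H}_{[i,i+B],[i+B,d]}$, i.e.\ a $d \times d^a$ matrix times a $d^a \times d$ matrix. By Definition~\ref{def:omega_alpha} and Lemma~\ref{lem:run_time_1}, this costs $O(d^{\omega(1,1,a)})$ per outer iteration and $O(d^{1+\omega(1,1,a)-a})$ in total over the $d^{1-a}$ outer iterations. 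Summing $d^\omega$, $d^2$, $d^{2+a}$, and $d^{1+\omega(1,1,a)-a}$, and absorbing sub-polynomial factors into $O^*(\cdot)$, yields the claim.

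I expect the main obstacle to be justifying the batched form of Line~\ref{line:update_w_outter}: one must verify that delaying the $B$ rank-one tail updates inside a block and applying them all at once at the end produces exactly the same $W$ as the unbatched algorithm of~\citet{fa23}. Only after this correctness check are the $B$ sequential tail outer products — which would naively cost $\Theta(d^2)$ each and $\Theta(d^3)$ overall — legitimately replaced by a single rectangular multiplication per block, which is what delivers the $d^{1+\omega(1,1,a)-a}$ bound and constitutes the entire source of improvement over the $O(d^3)$ analysis of~\citet{fa23}.
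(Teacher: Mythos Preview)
Your line-by-line accounting is correct and mirrors the paper's proof exactly, invoking the same lemmas for each step and summing to the same bound. Your final concern about verifying equivalence with an unbatched variant is unnecessary here: Algorithm~\ref{alg:sparse_gpt} is taken verbatim from~\citet{fa23} (the lazy block structure is already part of their stated algorithm), so Theorem~\ref{thm:main} is purely a running-time claim and no correctness argument is required.
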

\begin{proof}
    We split the analysis of running time in the following
    \begin{itemize}
        \item Line~\ref{line:init_m} takes time $O(d^2)$ to initiate $M$ and takes time $O(d^{1+a})$ to initiate $E$.
        \item By Lemma~\ref{lem:run_time_h}, Line~\ref{line:init_h} takes time $O(d^\omega)$.
        \item In each iteration, Line~\ref{line:check_j} takes time $O(1)$ to check if $j \bmod B_s = 0$. Since there are $d$ iterations, the total time is $O(d)$.
        \item By Lemma~\ref{lem:run_time_mask}, Line~\ref{line:update_m} takes time $O(d^2\log d) = O(d^{2+o(1)})$ over all iterations.
        \item In each iteration, Line~\ref{line:update_e_2} takes time $O(d)$ to compute $W_{*,j}/[\wt{H}]_{jj}^2$ and takes time $O(d)$ to compute $1_{d \times 1}-M_{*,j}$ and then takes time $O(d)$ to compute the Hadamard product of them. Since there are $d$ iterations, the total time is $O(d^2)$.
        \item By Lemma~\ref{lem:run_time_2}, Line~\ref{line:update_w_inner} takes time $O(d^{2+a})$ over all iterations.
        \item By Lemma~\ref{lem:run_time_1}, Line~\ref{line:update_w_outter} takes time $O(d^{1+\omega(1,1,a) - a})$ over all iterations.
        \item Line~\ref{line:final_w} takes time $O(d^2)$ to compute the Hadamard product of two $d\times d$ matrices.
    \end{itemize}
    Summing up,  Algorithm~\ref{alg:sparse_gpt} runs in time
    \begin{align*}
        O(\underbrace{d^2}_{\mathrm{Line}~\ref{line:init_m}} + \underbrace{d^\omega}_{\mathrm{Line}~\ref{line:init_h}} + \underbrace{d}_{\mathrm{Line}~\ref{line:check_j}} + \underbrace{d^{2+o(1)}}_{\mathrm{Line}~\ref{line:update_m}} + \underbrace{d^2}_{\mathrm{Line}~\ref{line:update_e_2}} + \underbrace{d^{2+a}}_{\mathrm{Line}~\ref{line:update_w_inner}} + \underbrace{d^{1+\omega(1,1,a)-a}}_{\mathrm{Line}~\ref{line:update_w_outter}} + \underbrace{d^2}_{\mathrm{Line}~\ref{line:final_w}}).
    \end{align*}
    Absorbing the same terms and lower order terms, we get our target running time
   \begin{align*}
        O(d^{\omega} + d^{2+a+o(1)} + d^{1+\omega(1,1,a)-a}).
    \end{align*}
    The proof is complete.
\end{proof}

\begin{remark}
    The $O(d^2)$ space complexity is sufficient to implement fast matrix multiplication, which is equivalent to that of standard matrix multiplication. In fact, all algorithms derived from Strassen's original approach exhibit a $\Theta(d^2)$ space complexity. For further details, we refer readers to \cite{y84, a91}.
\end{remark}

\section{Detailed Complexity Analysis}

From now on, we let the lazy update block size $ B = d^a $ for any $ a \in [0, 1]$ without defining it in the statement of lemmas. We first analyze the running time of the subroutine Procedure \textsc{MaskSelect} in Algorithm~\ref{alg:sparse_gpt}.

\begin{lemma}\label{lem:run_time_mask}
The running time of Procedure \textsc{MaskSelect} in Algorithm~\ref{alg:sparse_gpt} is $O(rd\log d)$. Hence the running time of Line~\ref{line:update_m} in Algorithm~\ref{alg:sparse_gpt} over all iterations is $O(d^2 \log d)$.
\end{lemma}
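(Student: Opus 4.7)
The plan is to analyze \textsc{MaskSelect} by charging costs line by line, and then aggregate across the invocations that occur inside the main \textsc{SparseGPT} loop. The per-line costs are already annotated in Algorithm~\ref{alg:sparse_gpt}; my job is to confirm them and sum them appropriately.

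For the first claim, I would bound each operation inside \textsc{MaskSelect} as follows. Initializing $M' \in \R^{d\times r}$ to the zero matrix at Line~\ref{line:sub_init_m} costs $O(dr)$. Inside the outer loop $k = 1,\ldots,r$, extracting the column $w = W'_{*,k}$ at Line~\ref{line:sub_init_w} takes $O(d)$; computing $(w\circ w)/(\wt{H}_{s+k,s+k})^2$ at Line~\ref{line:sub_update_w} takes $O(d)$; identifying the indices of the top $(1-p)d$ entries of $w$ at Line~\ref{line:sub_update_i} can be done by sorting in $O(d\log d)$; and the inner loop over $j \in J$ at Line~\ref{line:sub_update_m_ki} writes at most $|J| = (1-p)d \le d$ entries, costing $O(d)$. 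Summing over $k$, the dominant per-iteration cost is the sort, producing a total of $O(rd\log d)$ for the procedure.

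For the second claim, observe that Line~\ref{line:update_m} invokes \textsc{MaskSelect} with $r = B_s$, and only when $j \bmod B_s = 0$. Since $j$ takes at most $d$ values in total across the outer $i$-loop and inner $j$-loop combined, the number of such invocations is at most $d/B_s$. Each invocation costs $O(B_s \cdot d\log d)$ by the first part, so the aggregate cost over all iterations is
\begin{align*}
  \frac{d}{B_s} \cdot O(B_s \cdot d \log d) = O(d^2 \log d),
\end{align*}
which is the claimed bound.

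There is no substantial obstacle here: the argument is a direct aggregation of the costs annotated in Algorithm~\ref{alg:sparse_gpt}. The only mild subtlety to verify is that the writes into $M'$ indexed by $J$ cannot exceed $O(d)$ per outer iteration since $|J| \le d$, and that the amortization in the second claim telescopes cleanly because the $B_s$ factors cancel. Both are immediate from the structure of the algorithm.
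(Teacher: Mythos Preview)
Your proposal is correct and follows essentially the same approach as the paper: a line-by-line cost accounting of \textsc{MaskSelect} (dominated by the $O(d\log d)$ sort in each of the $r$ outer iterations), followed by the observation that the $d/B_s$ invocations at Line~\ref{line:update_m} each cost $O(B_s d\log d)$, with the $B_s$ factors cancelling to give $O(d^2\log d)$.
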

\begin{proof}
    We first analyze the running time of Procedure \textsc{MaskSelect}.
    We split the analysis of running time in the following
    \begin{itemize}
        \item Line~\ref{line:sub_init_m} takes time $O(dr)$ to initialize $M'$.
        \item In every iteration, Line~\ref{line:sub_init_w} takes time $O(d)$ to initialize $w$. Since there are $r$ iterations, the total running time is $O(dr)$.
        \item In every iteration, Line~\ref{line:sub_update_w} takes time $O(1)$ to compute $(\wt{H}_{s+k})^2$ and then takes time $O(d)$ to compute $(w \circ w) / a$. Since there are $r$ iterations, the total running time is $O(dr)$.
        \item In every iteration, Line~\ref{line:sub_update_i} takes time $O(d\log d)$ to sort $w$ (without overwriting) and takes time $O(d)$ to read the indices of top $(1-p)d$ largest entries. Since there are $r$ iterations, the total running time is $O(r(d \log d + d)) = O(r d \log d)$.
        \item In every iteration over $k$, Line~\ref{line:sub_update_m_ki} takes time $O(d)$ to update $M'$. Since there are $r$ iterations, the total running time is $O(dr)$.
    \end{itemize}
    Hence the Procedure \textsc{MaskSelect} takes time
    \begin{align*}
        O(dr + dr + r + dr + rd\log d + dr) = O(rd\log d).
    \end{align*}

    Now, we analyze the running of Line~\ref{line:update_m} over all iterations in Algorithm~\ref{alg:sparse_gpt}. Let $r = B_s$.
    The total number calls to \textsc{MaskSelect} is $d/B_s$. Hence the running over all iterations is
    \begin{align*}
        O((d/B_s) \cdot (B_s d \log d)) = O(d^2 \log d).
    \end{align*}
    Thus we complete the proof.
\end{proof}

Next, we analyze the running time of several key steps of Procedure \textsc{SparseGPT} in Algorithm~\ref{alg:sparse_gpt}. We show that the inverse Hessian can be computed in time $O(d^\omega)$.

\begin{lemma}\label{lem:run_time_h}
    The running time of Line~\ref{line:init_h} in Algorithm~\ref{alg:sparse_gpt} is $O(d^\omega)$.
\end{lemma}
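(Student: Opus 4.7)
The plan is to decompose the computation of $\wt{H} = (XX^\top + \lambda I_{d\times d})^{-1}$ into three elementary sub-steps and bound each in turn. First, I would compute the product $XX^\top$: since $X \in \R^{d\times d}$, this is the multiplication of two $d\times d$ matrices, which by Definition~\ref{def:omega_alpha} costs $\Tmat(d,d,d) = O(d^\omega)$.

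Next, I would add $\lambda I_{d\times d}$ to the resulting $d \times d$ matrix. This only modifies the $d$ diagonal entries, so it takes $O(d)$ time (and certainly $O(d^2) \le O(d^\omega)$ even if one writes the identity out in full). Finally, I would invert the resulting $d\times d$ matrix. It is a classical fact that matrix inversion and matrix multiplication are equivalent up to constants in their exponents, so inversion of a $d\times d$ matrix can be carried out in time $O(d^\omega)$.

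Summing the three contributions gives $O(d^\omega) + O(d) + O(d^\omega) = O(d^\omega)$, as required. There is no real obstacle here; the only thing worth flagging is that the analysis implicitly relies on the standard reduction from inversion to multiplication, which I would simply cite rather than reprove. The lemma then follows immediately.
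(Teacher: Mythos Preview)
Your proposal is correct and follows essentially the same approach as the paper: decompose into computing $XX^\top$ in $O(d^\omega)$, adding $\lambda I_{d\times d}$ in $O(d)$, and inverting in $O(d^\omega)$, then sum to $O(d^\omega)$. The only minor difference is that you explicitly flag the inversion-to-multiplication reduction as something to cite, whereas the paper simply asserts it.
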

\begin{proof}
    Line~\ref{line:init_h} takes time $O(d^\omega)$ to compute $XX^\top$, takes time $O(d)$ to add $XX^\top$ and $\lambda I_{d\times d}$. Then computing the inverse of a $d \times d$ matrix takes time $O(d^\omega)$. Hence Line~\ref{line:init_h} takes time
    \begin{align*}
        O(d^\omega + d + d^\omega) = O(d^\omega),
    \end{align*}
    where it follows from $\omega \geq 2$.
\end{proof}

Now, we compute the running time of updates of $W$ in the inner iterations.

\begin{lemma}\label{lem:run_time_2}
The running time of Line~\ref{line:update_w_inner} in Algorithm~\ref{alg:sparse_gpt} over all iterations is $O(d^{2+a})$.
\end{lemma}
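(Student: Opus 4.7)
The plan is to decompose the cost of each execution of Line~\ref{line:update_w_inner} as the cost of forming and subtracting an outer product of a column vector with a row vector, then carefully sum over the inner and outer loops using an arithmetic series rather than bounding by the worst case.

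First, I would fix an outer index $i$ and inner index $j$ and analyze a single execution. Here $E_{*, j-i} \in \R^{d \times 1}$ is a column vector and $\wt{H}_{j, [j, i+B]} \in \R^{1 \times (i+B-j+1)}$ is a row vector, so their product is a rank-one matrix of size $d \times (i+B-j+1)$ that can be formed (and subtracted from the corresponding block of $W$) in time $O(d \cdot (i+B-j+1))$.

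Next, I would sum over the inner loop $j = i+1, i+2, \ldots, i+B$. The column count $i+B-j+1$ ranges from $B$ down to $1$, so the total cost of one outer iteration is $O\bigl(d \sum_{k=1}^{B} k\bigr) = O(d B^2) = O(d^{1+2a})$, using $B = d^a$. Note that simply using the bound $d \cdot B$ per inner step would give $O(d B^2)$ as well, but the arithmetic-series view makes it clear that we are not losing anything by this bound.

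Finally, there are $\lfloor d/B \rfloor + 1 = O(d^{1-a})$ outer iterations (indexed by $i = 0, B, 2B, \ldots$), so the aggregate running time over all iterations is $O(d^{1-a} \cdot d^{1+2a}) = O(d^{2+a})$. I do not anticipate any genuine obstacle here: the lemma is essentially a bookkeeping exercise once one recognizes that each inner update is a rank-one outer-product subtraction on a shrinking column block. The only mild pitfall to avoid is summing the inner loop as $B$ copies of the worst-case $d \cdot B$ work; this still yields the same $O(d^{1+2a})$ bound, so either accounting suffices.
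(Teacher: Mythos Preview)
Your proposal is correct and follows essentially the same approach as the paper: both recognize each inner step as a rank-one outer-product subtraction costing $O(dB)$ and aggregate over $d$ total inner iterations to get $O(d^{2+a})$. Your arithmetic-series accounting of the shrinking column block is slightly more precise than the paper's uniform $O(dB)$ bound per step, but as you yourself note, it yields the same result.
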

\begin{proof}
In every iteration, Line~\ref{line:update_w_inner} involves one matrix addition and one matrix multiplication. The size of $E_{*, j-i}$ is $d \times 1$ and the size of $\wt{H}_{j, [j, i+B]}$ is $1 \times B$. Hence multiplying $E_{*, j-i}$ with $\wt{H}_{j, [j, i+B]}$ takes time $O(dB)$. Subtracting $E_{*, j-i}\wt{H}_{j, [j, i+B]}$ from $W_{*, [j, i+B]}$ takes time $O(dB)$. Since there are $d$ iterations, the running time of Line~\ref{line:update_w_inner} in Algorithm~\ref{alg:sparse_gpt} over all iterations is 
\begin{align*}
    O(d \cdot (dB + dB)) = O(d^2B)= O(d^{2+a}),
\end{align*}
where the first step uses basic algebra and the second step is due to $B = d^a$. 
\end{proof}

Finally, we provide the analysis of the running time for updating $W$ in the outer iterations.

\begin{lemma}\label{lem:run_time_1}
The running time of Line~\ref{line:update_w_outter} in Algorithm~\ref{alg:sparse_gpt} over all iterations is $O(d^{1+\omega(1,1,a) - a})$.

\end{lemma}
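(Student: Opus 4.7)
The plan is to bound the cost of a single execution of Line~\ref{line:update_w_outter} by the time of one rectangular matrix multiplication, and then multiply by the number of outer iterations. In each outer iteration indexed by $i \in \{0, B, 2B, \ldots\}$, this line performs the product $E \cdot \wt{H}_{[i, i+B],[i+B, d]}$ and subtracts it from a submatrix of $W$. Since $E \in \R^{d \times B}$ (from the initialization in Line~\ref{line:init_m}) and $\wt{H}_{[i, i+B],[i+B, d]}$ has $\Theta(B)$ rows and at most $d$ columns, the multiplication fits into the cost $\Tmat(d, B, d) = O(d^{\omega(1,1,a)})$ by Definition~\ref{def:omega_alpha} together with the choice $B = d^a$. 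The subsequent subtraction on an at most $d \times d$ matrix costs $O(d^2)$, which is dominated by the multiplication since $\omega(1,1,a) \ge 2$.

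Next, the outer loop runs $\lfloor d/B \rfloor + 1 = O(d^{1-a})$ times. Multiplying the per-iteration cost by the number of iterations yields
\begin{align*}
O(d^{1-a} \cdot d^{\omega(1,1,a)}) = O(d^{1+\omega(1,1,a) - a}),
\end{align*}
which matches the claimed bound.

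The only point requiring care is justifying the use of $\Tmat(d, B, d)$ as a uniform per-iteration upper bound in place of the tighter iteration-dependent quantity $\Tmat(d, B, d - i - B)$. This is immediate from monotonicity of $\Tmat$ in its dimensions: the worst case $i = 0$ already gives the stated bound, and smaller values of the third dimension can only decrease the work. I do not foresee a substantive obstacle here, as the argument is a direct application of the rectangular fast matrix multiplication exponent from Definition~\ref{def:omega_alpha}, combined with the crucial observation that the lazy update strategy lets us amortize $B = d^a$ single-column updates into one $d \times d^a$ by $d^a \times d$ product, which is exactly where the savings over the naive $O(d^3)$ bound materialize.
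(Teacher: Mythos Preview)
Your proposal is correct and follows essentially the same approach as the paper: bound each outer iteration by $\Tmat(d,B,d)=O(d^{\omega(1,1,a)})$ via monotonicity in the third dimension, then multiply by the $O(d^{1-a})$ iterations. Your treatment of the subtraction cost as $O(d^2)$ (dominated since $\omega(1,1,a)\ge 2$) is in fact more accurate than the paper's stated $O(dB)$, since the submatrix being updated has up to $d-i-B$ columns rather than $B$; either way the term is absorbed by the multiplication cost.
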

\begin{proof}
In every iteration, Line~\ref{line:update_w_outter} involves one matrix addition and one matrix multiplication. The size of $E$ is $d \times B$. The size of $\wt{H}_{[i,i+B], [i+B,d]}$ is $B \times (d-(i+B))$. Multiplying $E$ with $\wt{H}_{[i,i+B], [i+B, d]}$ takes time $\Tmat(d, B, d-(i+B)) = \Tmat(d, B, d)$. Subtracting $E\wt{H}_{[i, i+B], [i+B,d]}$ from $W_{*, [i+B,d]}$ takes time $O(dB)$.
Hence, the running time of Line~\ref{line:update_w_outter} over all iterations is
\begin{align*}
\sum_{i=1}^{d/B} O(dB +  \Tmat(d, B, d))
= & ~ \sum_{i=1}^{d/B} O(\Tmat(d, B, d)) \\
= & ~ (d/B) \cdot O(\Tmat(d, B, d)) \\
= & ~ d^{1-a} \cdot O(\Tmat(d,d^a, d)) \\
= & ~ d^{1-a} \cdot O({d^{\omega(1,1,a)}}) \\
= & ~ O(d^{1 + \omega(1,1,a) - a}),
\end{align*}
where the first step is because $\Tmat(d, B, d) \geq \Omega(dB)$, the second step follows from basic algebra, the third step uses $B=d^a$, the fourth step is due to Definition~\ref{def:omega_alpha}, and the last step uses basic algebra.
\end{proof}

\section{Conclusion}\label{sec:conclusion}
We improved the complexity analysis of SparseGPT from $O(d^3)$ to $O(d^{2.53})$, using techniques from fast matrix multiplication and lazy update ideas used in iterative maintenance problems. This tighter bound demonstrates that large language models can be pruned more efficiently than previously thought. Future work could explore further improvements or extensions of these methodologies to other model compression algorithms.

\ifdefined\isarxiv
\section*{Acknowledgement}
Research is partially supported by the National Science Foundation (NSF) Grants 2023239-DMS, CCF-2046710, and Air Force Grant FA9550-18-1-0166.
\else

\fi

\ifdefined\isarxiv
\bibliographystyle{alpha}
\bibliography{ref}
\else
\bibliography{ref}
\bibliographystyle{alpha}
\fi





\end{document}